\documentclass[a4paper, 12pt]{article} 
\usepackage[right=1in,left=1in,top=1in,bottom=1in]{geometry}
\usepackage{amsfonts, amssymb, epic, ecltree}
\usepackage{hyperref}
\usepackage{amsmath}
\usepackage[utf8]{inputenc}
\usepackage[T1]{fontenc}
\usepackage{lmodern}
\usepackage{pxfonts}
\usepackage{tikz}   
\usepackage{tikz-3dplot}
\usepackage{tcolorbox}
\usepackage{pgfplots}
\usepackage{setspace}
\usepackage[titles]{tocloft}

\usepackage{amsthm}
\usepackage{stmaryrd}
\usepackage{calrsfs}
\usepackage{graphicx}
\DeclareMathAlphabet{\pazocal}{OMS}{zplm}{m}{n}
\newtheorem{theorem}{Theorem}
\newtheorem{corollary}{Corollary}
\newtheorem{lemma}{Lemma}

\theoremstyle{definition}
\newtheorem{definition}{Definition}
\theoremstyle{remark}

\title{\Large{The Distribution Postulate in Algorithmic Bohmian Mechanics}}
\author{Jeffrey A. Barrett\footnote{Department of Logic and Philosophy of Science, University of California, Irvine, Irvine, CA 92697-5100. Email: j.barrett@uci.edu} 
, Eddy Keming Chen\thanks{Department of Philosophy,  University of California, San Diego, 9500 Gilman Dr, La Jolla, CA 92093-0119. Email: eddykemingchen@ucsd.edu  }
~~and 
Josiah Lopez-Wild\thanks{Department of Logic and Philosophy of Science, University of California, Irvine, Irvine, CA 92697-5100. Email: jlopezwi@uci.edu}}

\date{ }

\begin{document}

\maketitle

\begin{abstract}
In order to make the right empirical predictions Bohmian mechanics requires a special statistical boundary condition---the distribution postulate---but it is unclear how best to understand this condition. We show how one might use the theory of algorithmic randomness to formulate the distribution postulate as an \emph{objective constraining law}. The framework requires us to say something about admissible quantum-mechanical states and measurements. In return, algorithmic Bohmian mechanics (aBM) guarantees the standard Born statistics for a collection of canonical quantum experiments in the limit, not just with high probability. The algorithmic distribution postulate provides a sharp typicality condition, clarifies the status of quantum probabilities in the deterministic theory, and provides a concrete example of how notions provided by the theory of algorithmic randomness can aid in specifying the content of a physical law.
\vspace{5mm}

\noindent
Key words: Bohmian mechanics, algorithmic randomness, Martin–L\"of randomness, typicality, objective chance, probabilistic laws.
\end{abstract}


\section{Introduction}

Bohmian mechanics is a deterministic theory that makes the same empirical predictions as the standard collapse formulation of quantum mechanics for particle positions. As there are no stochastic processes, forward-looking probabilities in the theory are usually understood as epistemic, reflecting our ignorance of the particle positions. As with classical statistical mechanics, probabilities in Bohmian mechanics can arise from a special statistical boundary condition. In Bohmian mechanics this condition is called the \textit{distribution postulate}.

The theory of algorithmic randomness can be used to give a clear statement of the distribution postulate as an objective constraining law. To do this, one also needs to say something regarding admissible quantum states and measurements. We will call the full theory \textit{algorithmic Bohmian mechanics} (aBM).\footnote{While there are disanalogies between the two cases, a similar approach can be used to give the special boundary conditions (the past hypothesis and, more specifically, the statistical postulate) required for predictive compatibility between classical statistical mechanics and thermodynamics.} 

To see how to use algorithmic randomness to characterize the distribution postulate in Bohmian mechanics, we will begin with a brief discussion of how classical Bohmian mechanics works, then turn to how the distribution postulate has been understood. We will then introduce the notions required to formulate algorithmic Bohmian mechanics.

\section{Bohmian mechanics and the distribution postulate}

In its simplest form, Bohmian mechanics is
characterized by the following principles:\footnote{This description of Bohm's theory follows Bell
(1987, 127) rather than Bohm's quantum-potential description.}

\begin{itemize}

\item[1.] State Description: The complete physical state at a time is given by the wave function $\psi$ and the determinate particle configuration $Q$.

\item [2.] Wave Dynamics: The time evolution of the wave function is given by the usual linear dynamics. In the simplest case, this is just
\begin{equation}
i \hbar \frac{\partial \psi}{\partial t} =  \hat{H} \psi
\end{equation}
More generally, one uses the form of the unitary dynamics appropriate to the system and observables at hand.

\item[3.] Particle Dynamics: The particles move according to
\begin{equation}
\frac{d Q_k}{dt} = \frac{1}{m_k} \frac{\mbox{Im}(\psi^* \nabla_k \psi)}{\psi^* \psi}\Big|_Q
\end{equation}
where $m_k$ is the mass of particle $k$ and $Q$ is the current configuration.

\item[4.] Distribution Postulate: At an initial time $t_0$ the probability density for the configuration $Q$ is given by $\rho(Q, t_0)= |\psi(Q, t_0)|^2$. 

\end{itemize}

If there are $N$ particles, then $\psi$ is a function in $3N$-dimensional configuration space (three dimensions for the position of each particle), and the current particle configuration $Q$ is represented by a single point in configuration space that determines the position of every particle.  Each particle moves in a way that
depends on its position, the evolution of the wave function, and the positions of every other particle. One might picture the particle dynamics described by rule~3 by imagining the point representing the $N$-particle configuration as being carried along by the probability currents generated by the linear evolution
of the wave function $\psi$ in configuration space as if it were a massless particle being pushed about by a compressible fluid. This probability is often given an epistemic interpretation, reflecting our ignorance of the initial particle positions. (In our approach, we will reformulate it as an objective constraining law, the algorithmic distribution postulate in \S8.)

Bohmian mechanics ensures that the particle configuration is always determinate. Given the wave function, the particle configuration in turn determines all measurement records. The result of an $x$-spin measurement on an electron might, for example, be recorded in the direction in which the electron is deflected by the inhomogeneous magnetic field of a Stern-Gerlach measuring device.

Since the total particle configuration can be thought of as being carried around by the probability
current in configuration space, the probability of the particle configuration being found in a particular region of configuration space changes as the integral of $|\psi|^2$ over that region changes.  More specifically, the continuity equation
\begin{equation}
\frac{\partial \rho}{\partial t} + \mbox{div}(\rho v^\psi) = 0
\end{equation}
is satisfied by the probability density $\rho=|\psi|^2$.  This means that if the epistemic probability density for the particle configuration is ever $|\psi|^2$, then it will always be $|\psi|^2$, unless one makes an observation. This is the equivariance of the Born measure under the Bohmian dynamics.\footnote{See D\"urr, D., S. Goldstein, and N. Zangh\'i (1993) for a discussion of
the equivariance of the statistical distribution $\rho$.} It follows from this that if one starts with a prior epistemic probability density of $\rho(t_0)=|\psi(t_0)|^2$, then one should update this probability density at time $t$ so that $\rho(t)=|\psi(t)|^2$.  And if one makes an observation, then the epistemic probability density will be given by the wave function conditional on one's measurement result.

The key idea here is that it follows from the theory's two dynamical laws that if the epistemic probability density is $\rho(t_0)=|\psi(t_0)|^2$ at time $t_0$, then the push-forward probability density at a later time $t$ is $\rho(t)=|\psi(t)|^2$. In this way, the distribution postulate delivers the standard forward-looking Born probabilities as epistemic probabilities regarding particle positions.

\section{An $x$-spin measurement in Bohmian mechanics}

A simple example will illustrate how measurement works in Bohmian mechanics. We will use this basic Stern-Gerlach setup in the more complicated experiments we consider later.
\begin{figure}[!ht]
  \centering
    \includegraphics[scale=.6]{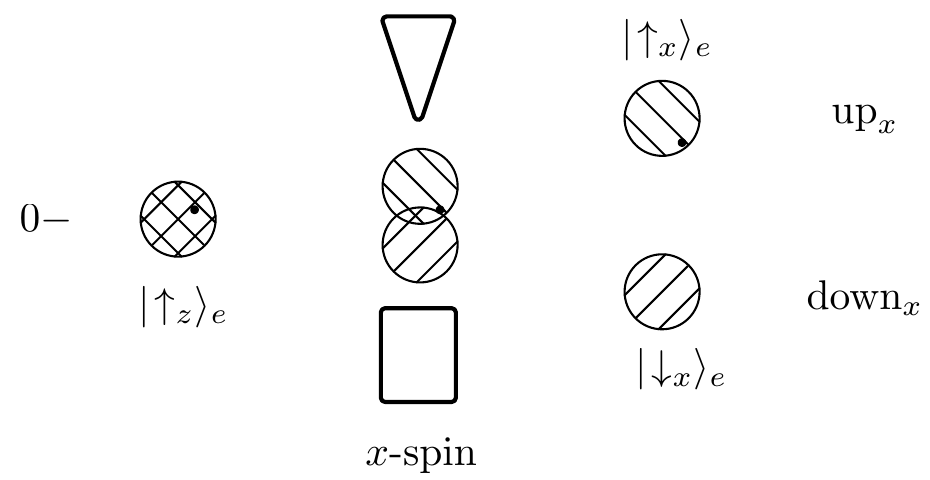}
    \caption{An $x$-spin measurement in Bohmian mechanics simple.}
\end{figure}

Consider a spin-$1/2$ particle $e$ in initial state
$$
|\!\uparrow_z\rangle_{e}|\psi\rangle_{e} = 1/\sqrt{2} (|\!\uparrow_x\rangle_{e} + |\!\downarrow_x\rangle_{e})|\psi\rangle_{e},
$$
and suppose that the probability density of the wave packet $|\psi\rangle_{e}$ is uniform and spherically symmetric.

Given the symmetry of the experiment, the wave packets
$|\!\uparrow_x\rangle_{e}|\psi_{\text{up}}\rangle_{e}$ and $|\!\downarrow_x\rangle_{e}|\psi_{\text{down}}\rangle_{e}$ are deflected up and down respectively by the Stern-Gerlach magnets, and if the particle $e$ starts in the top half of its wave function (as indicated in the figure), then it will be carried along by the probability currents, end up associated with the $|\!\uparrow_x\rangle_{e}|\psi_{\text{up}}\rangle_{e}$ wave packet, and hence deflected up. Had it been in the bottom-half of its wave packet, it would have been carried along by the probability currents, end up associated with the $|\!\downarrow_x\rangle_{e}|\psi_{\text{down}}\rangle_{e}$ wave packet, and hence deflected down.\footnote{See Goldstein (2025) for a description of Bohmian mechanics and Barrett (2019, 193-219) for a detailed discussion of such experiments.}

Bohmian mechanics ensures that the particle $e$'s position is always determinate. In this case, $e$'s position after the interaction with the magnets can be taken to constitute the $x$-spin measurement record, and its post-measurement position is, in turn, determined by whether $e$ was in the top or bottom half of the initial wave packet. So if the probability of the particle being in the top half and bottom half are both $1/2$, then the probability of each $x$-spin result is $1/2$.

The role of the distribution postulate in the theory is to provide prior probabilities for the configuration that yields posterior probabilities that agree with the standard quantum mechanical predictions given by the Born rule. If the initial configuration satisfies the distribution postulate, then one gets the standard forward-looking quantum probabilities for particle positions. So if all measurements are ultimately measurements of the position of something, then the theory makes the standard quantum predictions.

But how are we to understand the distribution postulate? Is it somehow entailed by basic principles of reason and the theory's dynamics? Or is it an independent law of nature? Is it just a recommendation for how a rational agent should assign prior credences to particle positions? Or is it a law that stipulates something regarding objective physical chance? And if so, then what exactly does the law say?\footnote{Each of the understandings listed here has been promoted at one time or another. See Callender (2007) for a survey. For the typicality reading, on which the distribution postulate expresses that the initial configuration is typical relative to the measure $|\psi|^2$, see D\"urr, Goldstein, and Zangh\'i (1992); for some recent discussions of the meaning of typicality in quantum mechanics and statistical mechanics, see Maudlin (2011), Allori (2020), Lazarovici (2023), and Wilhelm (2022). For alternative proposals that aim to derive quantum equilibrium dynamically rather than impose it as a boundary condition, see Valentini (1991) and Valentini and Westman (2005). For broader background on the interpretations of probability, see H\'ajek (2023).  }

David Albert characterized the distribution postulate as an objective physical law in characteristic Albert prose:
\begin{quote}
Suppose that the form of a certain single-particle wave function at a certain initial moment $t$ is $\psi(x,t)$; and suppose that at just that particular moment God places the \textit{particle} associated with that wave function $\ldots$ somewhere in the world, and suppose that God makes use of some genuinely \textit{probabilistic} procedure for deciding precisely where to \textit{put} that particle, and suppose that that procedure happens to entail that the probability that the particle gets put at any particular point is equal to the square of the absolute value of the particle's wave function at that point.'' (1992, 138)
\end{quote}
If the particle configuration is determined in this way at an initial time, then the Bohmian dynamics ensures that the epistemic probabilities one calculates for future times will be the standard quantum probabilities.

It is still unclear, however, precisely what the distribution postulate says. When and how does God determine the initial particle configuration? What is the ``genuinely probabilistic procedure'' she uses to decide where to put that particle? What precisely does it mean to have a genuinely probabilistic procedure? In short, where do the chances come from and in what sense are they objective?

The theory of algorithmic randomness allows one to give a clear formulation of the distribution postulate as an \textit{algorithmic constraining law}. The approach we will take here is based on the algorithmic notion of Martin-L\"{o}f randomness.\footnote{As described in Martin-L\"of (1966).}

\section{Algorithmic randomness}\label{sec:Algorithmic Randomness}

One can define what it means for a sequence to be (unbiased) Martin-L\"of random by considering the set of statistical tests that such a sequence will pass. A Martin-L\"of test is a sequence $\{U_n\}_{n\in \omega}$ of uniformly $\Sigma^0_1$ classes such that $\mu(U_n) \leq 2^{-n}$ for all $n$, where $\mu$ is the unbiased Lebesgue measure over the sequences. It is this uniform measure that makes the test unbiased. Being uniformly $\Sigma^0_1$ means that there is a single constructive specification of the sequence of classes. A constructive specification is one that can be represented by an ordinary algorithm.\footnote{See Dasgupta (2011), Li and Paul Vit\'anyi (2019), and Barrett and Huttegger (2021) for further details regarding algorithmic randomness.} The idea is that each sequence $\{U_n\}_{n\in \omega}$ of uniformly $\Sigma^0_1$ classes corresponds to a way that a sequence might be special and thus fail an associated statistical test for randomness. A sequence passes a particular Martin-L\"of test if it is not special in the specified computable sense checked by the test.

Let $2^\omega$ be the set of all $\omega$-length sequences (infinite-length sequences indexed by ordinal $\omega$). A class $C \subset 2^\omega$ is Martin-L\"of null if there is a Martin-L\"of test$\{U_n\}_{n\in \omega}$ such that $C \subseteq \bigcap_n U_n$. A sequence $S \in 2^\omega$ is \emph{Martin-L\"of random} if and only if $\{S\}$ is not Martin-L\"of null. That is, a sequence $S$ is Martin-L\"of random if and only if it passes every Martin-L\"of test. The idea is that an $\omega$-sequence~$\sigma$ is \emph{Martin-L\"{o}f random} if and only if it is not special in any computable sense.

It may be helpful to consider a few properties of an unbiased Martin-L\"{o}f random sequence. If a binary $\omega$-sequence~$\sigma$ is Martin-L\"{o}f random, then the limiting relative frequencies of $0$'s and $1$'s are each $1/2$, and $\sigma$ passes every effective test for statistical independence. This captures the intuition that a random sequence will have an unbiased limiting relative frequency and appear to be the result of a statistically independent process.  A sequence~$\sigma$ is Martin-L\"{o}f random if and only if there is a constant~$c$ such that all finite initial segments of the sequence are $c$-incompressible by a prefix-free Turing machine. This captures the intuition that one cannot effectively characterize a random sequence. A sequence is Martin-L\"{o}f random if and only if no constructive martingale succeeds on it. This captures the intuition that no computable fair betting strategy will do better than chance. And since almost all $\omega$-sequences are Martin-L\"{o}f random in Lebesgue measure, this notion also accords with the intuition that random sequences are typical. In sum, Martin-L\"{o}f randomness provides precisely what one should want from a clearly specified notion of randomness or typicality relative to an unbiased measure. And as we will see, Martin-L\"{o}f randomness can be extended to provide notion of typicality relative to any computable measure.

\section{Two tentative constructions}

We want to use an algorithmic notion like Martin-L\"{o}f randomness to specify the distribution postulate. In the present section, we will consider two tentative constructions to illustrate the idea, discuss why they falls short of what one might want, then say something about the relationship between statistics and probability. In the remainder of the paper, we will show that aBM predicts random sequences of measurement results that exhibit the standard quantum relative frequencies for experiments like those described in this section.

Consider an infinite sequence of spin-$1/2$ particles $S_1, S_2, \ldots S_k, \ldots$ each in an eigenstate of $z$-spin and associated with a spherically symmetric wave packet $|\psi\rangle$ of uniform probability density. We will write the state of particle $S_k$ as
$$
|\!\uparrow_z\rangle_{S_k}|\psi\rangle_{S_k} = 1/\sqrt{2} (|\!\uparrow_x\rangle_{S_k} + |\!\downarrow_x\rangle_{S_k})|\psi\rangle_{S_k}.
$$
Suppose that one plans to measure the $x$-spin of each particle in turn by  symmetrically separating the $|\!\uparrow_x\rangle$ and $|\!\downarrow_x\rangle$ components of each wave packet in the $x$-direction, recording~$0$ for~$\downarrow_x$ and~$1$ for~$\uparrow_x$, thereby producing an $\omega$-sequence of results.

According to Bohmian mechanics, the results of each measurement will depend on the position of each particle in its wave packet in the $x$-direction. As we saw earlier, if it is in the top half of its wave packet, then it will be deflected in the $x$-up direction and end up associated with an $x$-spin-up flavored wave packet; and if it is in the bottom half, it will be deflected in the $x$-down direction and end up associated with an $x$-spin-down flavored wave packet.

The idea is to say where the particles start by appeal to an algorithmic law. Consider a law $L^\star$ that stipulates that the initial positions of the particles were determined by the terms of a Martin-Löf random binary sequence $\sigma_{MLR}$: if the $k$th term of $\sigma_{MLR}$ is 1, then $S_k$ is in the top half of its wave packet; else it is in the bottom half. Such a distribution of particles relative to the Martin-L\"{o}f sequence $\sigma$ is illustrated in figure~2.

As a concrete example, consider an unbiased Martin-L\"{o}f random sequence
$$
\sigma_{MLR}=(1,0,1, \ldots)
$$

and the infinite series of wave packets represented in figure~2. 
\begin{figure}[!ht]
  \centering
    \includegraphics[scale=.6]{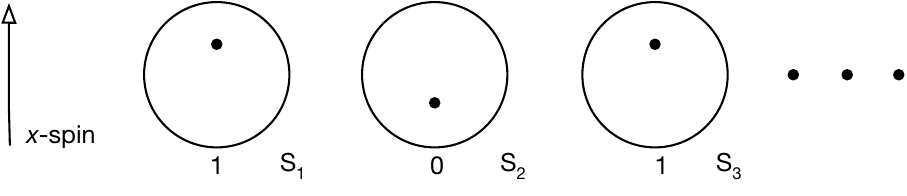}
    \caption{Particles distributed using the Martin-L\"{o}f random sequence $\sigma_{MLR}$.}
\end{figure}
And suppose God uses the following rule for distributing the particles at an initial time relative to $\sigma_{MLR}$: if the $k$th digit of the sequence is $0$, put $S_k$ in the bottom of its respective wave packet; else put it in the top.

On this construction, if we perform $x$-spin measurements on each, we will end up with a Martin-L\"{o}f random sequence of records
$$
r_{\text{out}}=(\uparrow_x,\downarrow_x,\uparrow_x, \ldots)
$$
with limiting relative frequency $1/2$ for each possible outcome. Here the sequence of results simply recapitulates $\sigma_{MLR}$. Since the sequence of measurement results is ML random, we get unbiased relative frequencies with certainty in the limit, not just probability~1.

While this way of determining the initial configuration will not give us the right statistics for other spin observables, it is easy to specify a construction that does.

Consider an unbiased ML random string
$$
\sigma_{MLR}=(1,0,0,1,0,1 \ldots)
$$
is ML-random, and suppose again that we have an infinite series of wave packets in eigenstates of $z$-spin. Here God uses $\sigma_{MLR}$ to place the particles recursively by interpreting $0$ to mean bottom and left and $1$ to mean top and right. Since the \textit{first term} in $\sigma_{MLR}$ is 1, particle $S_1$ is placed in the \textit{top half} of its wave packet. This is the first line of figure~3.
\begin{figure}[!ht]
  \centering
    \includegraphics[scale=.6]{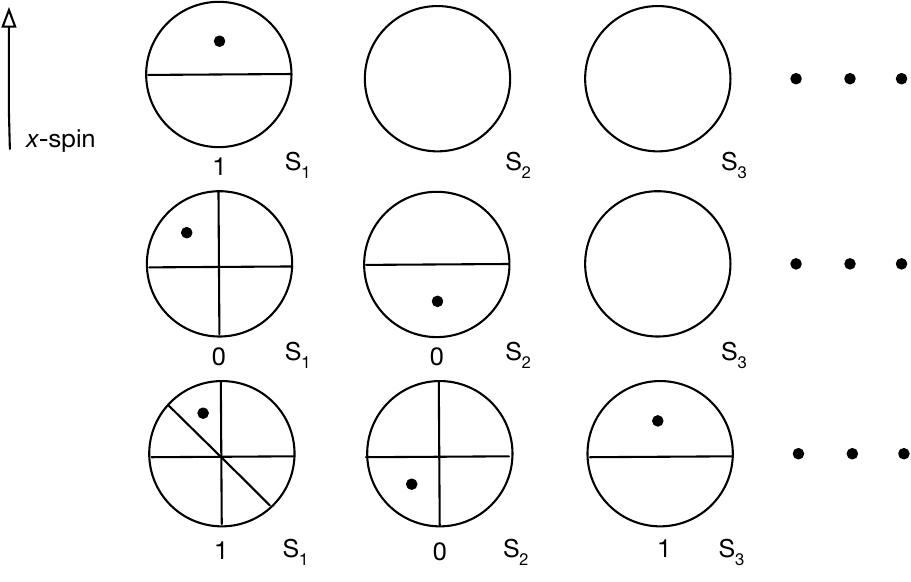}
    \caption{A slightly more sophisticated random distribution.}
\end{figure}
Then since the next \textit{two} terms in $\sigma_{MLR}$ are 0 and 0, particle $S_1$ is shifted to the \textit{left side} of the top half of its wave packet and particle $S_2$ is placed in the \textit{bottom half} of its wave packet. This is the second row of the figure. For the third row, since the next \textit{three} terms in $\sigma_{MLR}$ are 1, 0, and 1, particle $S_1$ is shifted to the \textit{right side} of the left side of the top half of its wave packet, $S_2$ is shifted to the \textit{left side} of the bottom half of its wave packet, and $S_3$ is placed in the \textit{top half} of its wave packet. And so on.

On this construction, the particles are randomly distributed in their wave packets uniformly with respect to angle. As a result, any particular spin observable in the $xy$-plane will result in an ML-random sequence of outcomes with the standard quantum limiting relative frequencies of $1/2$ for each possible outcome.

While this improved construction is suggestive, we will need something significantly more subtle to get the standard quantum statistics generally. There are at least three ways in which the present construction falls short of what one might want. Further, there is an issue regarding how the statistics connect to single-shot credences in algorithmic Bohmian mechanics.

First, on the present construction, there will always exist sequences of \textit{different} spin measurements where the results are sure to be nonrandom. Indeed, if one knew where the particles were in their respective wave packets, one could choose a sequence of spin measurements such that each spin result was \textit{up} in the direction measured for each particular particle.

Because of how we ultimately want to understand probability, we need to \textit{guarantee} that measurement results will be randomly distributed with the Born relative frequencies regardless of what spin observables one wishes to measure. While there is no such guarantee in standard Bohmian mechanics, a modest constraint on experimental protocols guarantees randomly distributed Born statistics in algorithmic Bohmian mechanics. The strategy is to restrict aBM to \textit{computable sequences of measurements}. Specifically, the theory represents sequences of measurements as layerwise computable functions relative to the standard Born measure. We will discuss what this means and motivate the representation when we have the requisite technical notions at hand.

Second, the present construction does not work for repeated measurements on a single system. Consider alternating $x$- and $y$-spin measurements of particle $S_1$. While the first $x$-spin measurement will yield the first term of $\sigma_{MLR}$ and the first $y$-spin measurement will yield the second term, the construction allows for a degree of freedom such that both $x$-spin and $y$-spin results will, contrary to the standard quantum statistics, exhibit biased relative frequencies in the limit.\footnote{Specifically, if one supposes that each particle is the same distance from the center of its wave function, one induces a measure in both the $x$- and $y$-directions that differs from the quantum probability measure represented by the wave function.} We return to this case in \S8.

Third, and closely related, while the present construction works fine for a sequence of idealized measurements of a particular spin observable in the $xy$-plane when each particle is associated with a uniform, symmetric, separable wave packet, one should want a construction that works when the initial wave function is neither uniform, symmetric, nor separable and that yields the right quantum statistics for other observables as well.

We will address these considerations by showing how to choose a \textit{point} in configuration space that is ML-random relative to an arbitrary computable measure $\mu_\psi$. The distribution postulate will then be the simple constraint that the initial configuration is an ML-random point relative to the initial Born measure $\mu_\psi$.

Extending the notion of Martin-L\"{o}f randomness from unbiased sequences to points in configuration space relative to a specified measure is mathematically straightforward but requires some conceptual care. We will show how to do this in the next section.

Finally, algorithmic Bohmian mechanics guarantees the standard quantum statistics for sequences of measurements like those discussed in this section, but the right statistics do not by themselves entail the right credences. While one might expect an inference from physical chances to credences to require something akin to the Principal Principle, because aBM \textit{guarantees} the right quantum statistics in the limit, one just needs exchangeable credences over measurement outcomes to get the standard Born probabilities as single-shot credences by means of Bruno de Finetti's representation theorem.\footnote{See de Finetti (1973-1974) for his views, Diaconis and Skyrms (2018) for a discussion of de Finetti's representation theorem, and Barrett and Chen (2025) and (2026) for discussions of $L^\star$ laws like the distribution postulate formulated here, the Principal Principle, and exchangeability.} We will return to this point briefly at the end of the paper.

\section{Computability of states and measurements}\label{sec:Computability of measurements}

One of the fundamental results in the study of Martin-Löf randomness is the existence of a universal test.
A \textit{universal Martin-L\"{o}f test} is a Martin-L\"{o}f test $\{U_n\}_{n \in \mathbb{N}}$ such that for every Martin-L\"{o}f test $\{V_n\}_{n \in \mathbb{N}}$ there exists $c \in \mathbb{N}$ such that $V_{n+c} \subseteq U_n$ for all $n$.
In other words, the universal test eventually contains \textit{any} test, and hence \textit{any} Martin-L\"{o}f null set. As a result, any Martin-L\"{o}f random point must fall outside the null set defined by the universal test; for this reason we can call that null set a \textit{universal} Martin-L\"{o}f null set.

We can use the notion of a universal Martin-L\"{o}f test to specify directly which \textit{points} in configuration space are Martin-L\"{o}f random relative to the Born measure $\mu_\psi$. Suppose $\mu_\psi$ is a computable measure on configuration space, a computable Polish space. A universal $\mu_\psi$–Martin–Löf test is a uniformly effectively open sequence of sets $\{U_n\}$ such that
\[
\mu_\psi(U_n) \;\le\; 2^{-n} \qquad \text{for all } n,
\]
and with the property that for any other $\mu_\psi$–Martin–Löf test
$\{V_n\}_{n\in\mathbb{N}}$ there exists a constant $c\in\mathbb{N}$
satisfying
\[
V_{n+c} \;\subseteq\; U_n
\qquad\text{for all } n.
\]
Intuitively, the sets $U_n$ identify those configurations that fail
increasingly stringent effective tests for being $\mu_\psi$–random. The set of $\mu_\psi$–Martin–Löf random points at time $t_0$ is
$$
\mathsf{MLR}^{\mu_\psi}
:=\;
X \setminus \bigcap_{n\in\mathbb{N}} U_n.
$$
where $X$ is configuration space and $\bigcap_{n} U_n$ are the effectively describable non-random points with respect to the measure $\mu_\psi$ determined by $|\psi(t_0)|^2$.

The set $\mathsf{MLR}^{\mu_\psi}$ is independent (up to finite changes of index) of the particular choice of universal test. It allows us to say what it means for the initial configuration to be typical relative to measure $\mu_\psi$. As a result, the algorithmic distribution postulate says just this: the initial configuration $x_0\in X$ is chosen so that
$$
x_0 \in \mathsf{MLR}^{\mu_\psi}.
$$

Now we turn to the problem of finding a suitable representation for a computable sequence of measurements. The key idea here is that of a layerwise computable function. To say what that is, we first need the notion of randomness deficiency.

The universal Martin-L\"{o}f test allows us to make distinctions among $\mu$-Martin-L\"{o}f random points.
Intuitively, some random points are ``more random'' than others. The idea is particularly clear in the context of random sequences.
Consider a Martin-L\"{o}f random sequence that begins with 1 million consecutive zeroes, and after that is ``random''.
The sequence obtained by deleting those zeroes is also Martin-L\"{o}f random, and moreover is intuitively ``more random''.
This intuition can be made precise via the \textit{randomness deficiency} of a point in configuration space $X$, which we now define.
\begin{definition}
    Fix a universal $\mu$-Martin-L\"{o}f test $\{U_n\}_{n \in \mathbb{N}}$.
    Let $\mathsf{MLR}^\mu_n := X \setminus U_n$.
    For any $\mu$-Martin-L\"{o}f random $x \in X$ the minimal $n$ such that $x \in \mathsf{MLR}^\mu_n$ is the \textit{randomness deficiency} of $x$.
\end{definition}
Here are a few important facts: $\mathsf{MLR}^\mu = \bigcup_n \mathsf{MLR}^\mu_n$; for all $n$, $\mu(\mathsf{MLR}^\mu_n) \geq 1-2^{-n}$ and $\mathsf{MLR}^\mu_n$ is effectively compact;\footnote{A set $K \subseteq X$ is \textit{effectively compact} if it is effectively closed and for any countable open cover $\{C_n\}_{n \in \mathbb{N}} \supseteq K$ there is a partial computable function that computes a finite subcover.}
and randomness deficiency can be defined relative to any notion of ``universal test'' (e.g., compressibility by universal Turing machine, or universal lower semicomputable martingale).
Intuitively those points that are more random have lower randomness deficiency; they are judged not to exhibit a computable pattern relative to the measure sooner by the fixed universal test and are hence ruled out sooner.

The core idea behind a \textit{layerwise computable} function is that it is computable on each ``layer'' $\mathsf{MLR}^\mu_n$ of the Martin-L\"{o}f randoms.
\begin{definition}
    Let $(X, \mu)$ be a computable probability space and let $Y$ be a computable Polish space.
    Then $f: X \to Y$ is $\mu$-\textit{layerwise computable} if for all $k \in \mathbb{N}$ and all $x \in \mathsf{MLR}^\mu_k$, $f(x)$ is computable uniformly in $x, k$.
\end{definition}
So given a Martin-L\"{o}f random point $x$ and its randomness deficiency $k$, there is a single machine that computes $f(x)$ on the basis of $x, k$.
Thus a sequence $\{f_n\}_{n \in \mathbb{N}}$ of functions is \textit{uniformly} $\mu$-\textit{layerwise computable} if there is a single machine which, given a random point $x$ and its randomness deficiency, computes each $f_i(x)$.

In particular, \textit{only} the Martin-Löf random points of $X$ have a finite randomness deficiency.
If $x \in X$ is not $\mu$-Martin-Löf random, then it is not in $\mathsf{MLR}^\mu_n$ for any $n$.
So another way to define a layerwise computable function is: $f$ is layerwise computable if and only if $f$ is computable \textit{on the Martin-Löf random points}.
The behavior of $f$ outside the Martin-Löf random points need not be computable;
but, since $\mathsf{MLR}^\mu$ is a $\mu$-measure one set, this means that layerwise computable functions are computable almost everywhere.
So they are a natural generalization of (everywhere) computable functions, which are better suited to our current measure-theoretic setting.\footnote{See the surveys by Hoyrup (2020) and Hoyrup and Rute (2021) for more details on layerwise computable functions, and arguments to the effect that they are the natural effective counterpart to measurable functions.}
Layerwise computable functions are also particularly nice because of the following result due to Hoyrup and Rojas (2009).
\begin{theorem}\label{thm:LayerwisePushforwardsPreserveMLR}
    Suppose $(X, \mu)$ is a computable probability space and $Y$ is a computable Polish space.
    Let $f: X \to Y$ be a $\mu$-layerwise computable function.
    Then the \textit{push-forward measure} defined
    \[\mu_f(A) = \mu(f^{-1}(A)),\]
    for $A \subseteq Y$ measurable, is itself a computable probability measure, and moreover if $x \in X$ is $\mu$-Martin-L\"{o}f random then $f(x)$ is $\mu_f$-Martin-L\"{o}f random.
\end{theorem}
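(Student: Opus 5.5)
The proof has two parts: first show that $\mu_f$ is computable, then show that $f$ carries $\mu$-random points to $\mu_f$-random points by pulling tests back along $f$. Throughout we use the facts listed after the definition of randomness deficiency: each layer $\mathsf{MLR}^\mu_k$ is effectively compact, $\mu(\mathsf{MLR}^\mu_k)\ge 1-2^{-k}$, and $f$ is computable on $\mathsf{MLR}^\mu_k$ uniformly in $k$.

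\textbf{Step 1: computability of $\mu_f$.} It suffices to show that $\mu_f(V)=\mu(f^{-1}(V))$ is lower semicomputable uniformly in an effectively open $V\subseteq Y$. This is the standard criterion for computability of a probability measure on a computable Polish space, with $\mu_f(Y)=1$ supplying the upper bounds.

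Since $f$ is computable on the layer $K_k:=\mathsf{MLR}^\mu_k$, the set $f^{-1}(V)\cap K_k$ equals $W_k\cap K_k$ for an effectively open $W_k\subseteq X$, uniformly in $k$ and $V$. It is obtained by running the machine for $f$ on basic neighbourhoods and collecting the balls on which it commits $f(x)$ into $V$.

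We then have
\[
\mu(W_k\cap K_k)\le \mu_f(V)\le \mu(W_k\cap K_k)+2^{-k}.
\]
The quantity $\mu(W_k\cap K_k)$ is not obviously lower semicomputable, since $K_k$ is closed. It is, however, approximable: $\mu(W_k)$ is lower semicomputable and $\mu(K_k)$ is upper semicomputable. The natural fix is to set $W'_k:=W_k\cup U_k$, where $U_k=X\setminus K_k$ is the $k$-th level of the universal test. Then $\mu(W'_k)$ is lower semicomputable and satisfies
\[
\mu_f(V)-2^{-k}\le \mu(W'_k)-2^{-k}\le \mu_f(V)+2^{-k}.
\]
So $\mu_f(V)$ is the limit of the lower semicomputable reals $\mu(W'_k)-2^{-k}$, and $\sup_k\bigl(\mu(W'_k)-2^{-k}\bigr)$ need not exceed $\mu_f(V)$.

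Extracting genuine lower semicomputability from this two-sided $2^{-k}$ approximation is the main obstacle. I would handle it by taking $\sup_k$ of lower approximations of $\mu(W_k\cap K_k)$, exploiting effective compactness of $K_k$. Alternatively, I would show directly that $\mu_f(V)$ is computable for $V$ ranging over a computable basis of $\mu_f$-continuity sets, using the fact that $\mu(K_k)$ is computable up to $2^{-k}$.

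\textbf{Step 2: preservation of randomness.} Suppose $f(x)$ is not $\mu_f$-random, so that $f(x)\in\bigcap_n V_n$ for some $\mu_f$-Martin-L\"of test $\{V_n\}$. Define
\[
T_n:=\bigl(f^{-1}(V_{n+1})\cap K_{n+1}\bigr)\cup U_{n+1},
\]
which by the construction in Step 1 is effectively open uniformly in $n$. It satisfies
\[
\mu(T_n)\le \mu_f(V_{n+1})+2^{-(n+1)}\le 2^{-n},
\]
so $\{T_n\}$ is a $\mu$-Martin-L\"of test. If $x$ were random with deficiency $k$, then for all $n\ge k$ we would have $x\in K_{n+1}$ and $f(x)\in V_{n+1}$, hence $x\in T_n$. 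For $n<k$ we have $T_n\supseteq T_k$ after intersecting, or more simply we replace $T_n$ by $\bigcap_{m\le n}$-free tails, using $T'_n=T_{n+k_0}$-style shifts. Either way $x\in\bigcap_n T_n$, contradicting randomness. So $f(x)$ is $\mu_f$-random.
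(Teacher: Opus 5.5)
The paper does not prove this theorem itself; it cites Hoyrup and Rojas (2009). Your strategy is essentially the standard argument behind that result: approximate $f^{-1}(V)$ on each layer $K_k$ by a uniformly effectively open $W_k$, pad with the level $U_k$ of the universal test, and pull tests back along $f$.

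\textbf{Step 1 has a gap, but your own construction closes it.} You call the passage to lower semicomputability ``the main obstacle'' and offer only vague remedies. The obstacle comes from a loose bound. Since $W'_k$ is the disjoint union of $W_k\cap K_k=f^{-1}(V)\cap K_k$ and $U_k$,
\[
\mu_f(V)\;\le\;\mu(W'_k)\;=\;\mu\bigl(f^{-1}(V)\cap K_k\bigr)+\mu(U_k)\;\le\;\mu_f(V)+2^{-k}.
\]
So $\mu(W'_k)-2^{-k}\le\mu_f(V)$ for every $k$. The approximation is one-sided from below, not two-sided. Hence $\mu_f(V)=\sup_k\bigl(\mu(W'_k)-2^{-k}\bigr)$ is a supremum of reals that are left-c.e.\ uniformly in $k$ and in a code for $V$. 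That is exactly what the paper's definition of a computable measure requires. Effective compactness of $K_k$ and $\mu_f$-continuity sets play no role. Indeed $\mu(W'_k)-2^{-k}\le\mu(W_k\cap K_k)$ because $\mu(U_k)\le 2^{-k}$, so this is already the uniform lower approximation of $\mu(W_k\cap K_k)$ you were looking for.

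\textbf{Step 2 is correct, but its last paragraph should be cut.} You have $T_n=W\cup U_{n+1}$, where $W$ is the effectively open set attached to $V_{n+1}$ at layer $n+1$. So the $T_n$ are uniformly effectively open, and $\mu(T_n)\le 2^{-(n+1)}+2^{-(n+1)}=2^{-n}$. Your closing case split on the deficiency of $x$, and the remarks about intersecting and shifting indices, are unnecessary. Because $(A\cap K_{n+1})\cup U_{n+1}=A\cup U_{n+1}$, you have $T_n\supseteq f^{-1}(V_{n+1})$ for every $n$. Hence $f(x)\in\bigcap_n V_n$ places $x$ in $\bigcap_n T_n$ directly, whatever its deficiency, and this contradicts the $\mu$-randomness of $x$.
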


\begin{definition}
    A \textit{computable binary measurement} is a $\mu_\psi$-layerwise computable random variable $f: X \to \{0,1\}$.
    A \textit{computable discrete measurement} is a $\mu_\psi$-layerwise computable random variable that takes values in $\mathbb{N}$.
    Finally, a \textit{computable measurement} is a $\mu_\psi$-layerwise computable \textit{real-valued} random variable.
\end{definition}

That is, a computable binary measurement takes the value 0 on some subset of configuration space and takes the value 1 on the complement of that subset. This is a general form of a computable spin-measurement, for example. A computable discrete measurement can take finite or countably many values. This corresponds to a more sophisticated measuring device, one which can split the wave function into a discrete partition. The most general definition accommodates real-valued position measurements. This is an idealization since real measuring devices have finite accuracy, but there's no \textit{mathematical} reason we can't work with them.

Consider how a $\mu_\psi$-layerwise-computable function
$$
f : X \to \{0,1\}
$$
can be used to represent a single $x$-spin measurement in Bohmian mechanics. A measurement in Bohmian mechanics involves the evolution of the composite system consisting of the object system and the measurement apparatus. The result of the measurement is determined by the pointer position which, in turn, is a function of the initial configuration of the composite system. There are two aspects to the measurement: (1) the deterministic flow of the configuration induced by the evolution of the wave function under the dynamics and (2) the partition of configuration space into regions corresponding to different final pointer positions. The $\mu_\psi$-layerwise computable function~$f$ represents the composition of the two.

With this in mind, let $X = \mathbb{R}^{3N}$ be the configuration space of the composite system: the particle being measured, the $x$-spin measuring device with its pointer, and the relevant environment. A single point $x_0 \in X$ gives the complete initial configuration of the composite system at time $t_0$. We will suppose that the measurement interaction occurs between times $t_0$ and $t_1$.

Given the global Hamiltonian, the Bohmian guiding equation generates a deterministic flow
$$
\Phi^{\psi_0}_{t_1} : X \to X,
$$
so that for initial configuration $x_0$ and wave function $\psi_0$,
$$
x(t_1) = \Phi^{\psi_0}_{t_1}(x_0)
$$
is the configuration after the interaction between the particle and apparatus has correlated the $x$-spin of the particle with the position of the pointer. If the wave function and Hamiltonian are layerwise computable, then the flow $\Phi^{\psi_0}_{t_1}$ is $\mu_\psi$-layerwise computable.

After the interaction and decoherence, the apparatus pointer occupies one of two macroscopically disjoint regions of configuration space. We will call these regions ``$x$-spin up'' and ``$x$-spin down'' indicating what the two pointer configurations mean. Define a partition map
$$
g : X \to \{0,1\}
$$
by
$$
g(x) =
\begin{cases}
1, & x \in \mbox{``$x$-spin up''}, \\[6pt]
0, & x \in \mbox{``$x$-spin down''}.
\end{cases}
$$
Then a single x-spin measurement is represented by the composition
$$
f(x_0) \;=\; g\bigl(\Phi^{\psi_0}_{t_1}(x_0)\bigr).
$$
In this way, the function $f:X \to \{0,1\}$ encodes the entire $x$-spin measurement interaction and maps initial configurations to the pointer outcomes that would be realized if those initial configurations obtained. The function $f$ is $\mu_\psi$-layerwise computable because both $\Phi^{\psi}_{t_1}$ and $g$ are $\mu_\psi$-layerwise computable on all Martin–Löf random initial configurations. Its being $\mu_\psi$-layerwise computable means that it is computable on all configurations that can occur if the initial configuration is $\mu_\psi$-Martin–Löf random. And that the initial configuration is in fact $\mu_\psi$-Martin–Löf random is ensured by the algorithmic distribution postulate.

In a way very like the $x$-spin measurement just described, a single $\mu_\psi$-layerwise computable function $f$ can represent any computable sequence of measurements and their outcomes. That this $f$ will be computable on all $\mu_\psi$-Martin-L\"{o}f random initial configurations is what makes $\mu_\psi$-layerwise computability the right notion for capturing what it means in the theory for a sequence of measurements to be computable.

While we have not appealed to the equivariance of the standard Born measure under the Bohmian dynamics, it plays an essential role in aBM. Since a measurement interaction is represented by a $\mu_\psi$-layerwise computable map $f$ expressible as a composition of the Bohmian flow and the pointer partition, it follows from the equivariance of the Born measure that
\begin{equation}\label{eq:EquivarianceTransport}
\mu_{\psi_t}(A)
\;=\;
\mu_{\psi_0}\bigl((\Phi^{\psi_0}_t)^{-1}(A)\bigr).
\end{equation}
for any measurable $A\subseteq X$ and every time $t$ that the push-forward measure induced by $f$ is the Born measure given the wave function at time $t$.

In the context of algorithmic Bohmian mechanics this means that the main theorem of the next section will entail that a $\mu_\psi$-layerwise computable map $f$ preserves Martin–Löf randomness relative to the time-evolved Born measure on possible configurations. Hence, a configuration that is typical relative to the Born measure at time $t_0$ will be mapped to a configuration that is typical relative the push-forward Born measure over pointer positions at time $t$. And the distribution postulate itself will ensure that the initial configuration is typical relative to the Born measure at time $t_0$.

\section{Martin-L\"{o}f random outcomes}

We now show that a $\mu_\psi$-layerwise computable sequence of measurements $\{f_n\}$ preserves algorithmic randomness by transforming a ML-random initial configuration in Born measure into a sequence of ML-random measurement outcomes in a push-forward measure that respects the standard Born statistics. Specifically, the theorem~2 below entails that if the initial configuration $x_0$ is Martin-L\"{o}f random with respect to $\mu_{\psi}$, the corresponding sequence of measurement outcomes $(f_1(x_0), f_2(x_0), \ldots)$ will be Martin-L\"{o}f random with respect to a push-forward joint measure $\mu_\omega$ induced by $\mu_{\psi}$ and the $\mu_{\psi}$-layerwise computable $\{f_n\}$.

Let $X$ be a computable Polish configuration space.
Let $\mu_\psi$ denote the quantum probabilities on $X$ induced by state $|\psi\rangle$.
Suppose $\{f_n\}$ is a uniformly computable sequence of measurements (i.e., a uniformly $\mu_\psi$-layerwise computable sequence of functions).
From each initial segment $\{f_i\}_{i \leq n}$ of the sequence we define a layerwise-computable function $f_{1:n}: X \to \mathbb{R}^n$ defined by $f_{1:n}(x)=(f_1(x), f_2(x), \ldots, f_n(x))$.
Each $f_{1:n}$ induces a joint probability measure $\mu_n$ on $\mathbb{R}^n$ given by 
\[\mu_n(A)=\mu_{\psi}(f_{1:n}^{-1}(A)).\]

By Kolmogorov's extension theorem the family $\{\mu_i\}_{i \in \mathbb{N}}$ extends to a unique joint measure $\mu_\omega$ on $\mathbb{R}^\omega$.
We want to show that for any $x$ that is Martin-L\"{o}f random with respect to $\mu_{\psi}$ the corresponding outcome sequence $(f_1(x), f_2(x), \ldots)$ is Martin-L\"{o}f random in $\mathbb{R}^\omega$ with respect to the joint measure $\mu_\omega$.

\begin{lemma}\label{lemma:LayerwiseComputableMeasurementSequence}
    Let $(X, \mu)$ be a computable probability space.
    Suppose $\{f_i\}_{i \in \mathbb{N}}$ is a uniformly computable sequence of measurements.
    Then the function $f_\omega: X \to \mathbb{R}^\omega$ defined
    \[f_\omega(x) = (f_1(x), f_2(x), \ldots)\]
    is $\mu$-layerwise computable.
\end{lemma}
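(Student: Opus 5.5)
The plan is to build, directly from the uniform layerwise machine for $\{f_i\}$, a single oracle machine that, given a $\mu$-Martin-L\"of random $x$ and any $k$ with $x \in \mathsf{MLR}^\mu_k$, outputs a name for the point $f_\omega(x) \in \mathbb{R}^\omega$. The first step is to fix the computable Polish structure on $\mathbb{R}^\omega$. We take the product topology with the standard compatible metric
\[
d(y,z) = \sum_{i \ge 1} 2^{-i} \min\bigl(1, |y_i - z_i|\bigr),
\]
and as computable dense set the eventually-zero sequences with rational entries. A name for $y \in \mathbb{R}^\omega$ is a sequence of ideal points $q_m$ with $d(y,q_m) \le 2^{-m}$. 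The elementary fact I will use is that $y$ is computable from an oracle $O$ uniformly whenever each coordinate $y_i$ is computable from $O$ uniformly in $i$.

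The second step unpacks the hypothesis. ``Uniformly computable sequence of measurements'' means, by the definition given above, that there is a single oracle machine $M$ such that for every $k$, every $x \in \mathsf{MLR}^\mu_k$ and every $i$, running $M$ on inputs $(x, k, i)$ produces a name for the real $f_i(x)$. From $M$ I define a new machine $M'$. On inputs $(x, k)$ and precision $m$, it runs $M(x,k,i)$ for $i = 1, \dots, m+1$, each to precision $2^{-(m+1)}$, and obtains rationals $r_i$. It then outputs the ideal point $q_m = (r_1, \dots, r_{m+1}, 0, 0, \dots)$. The error estimate is
\[
d\bigl(f_\omega(x), q_m\bigr) \le \sum_{i \le m+1} 2^{-i} \, 2^{-(m+1)} + \sum_{i > m+1} 2^{-i} \le 2^{-(m+1)} + 2^{-(m+1)} = 2^{-m}.
\]
This is the required precision. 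Each step of $M'$ queries only finitely many coordinates of $M$, and every one of those computations halts because $x \in \mathsf{MLR}^\mu_k$, so $M'$ is total on $\mathsf{MLR}^\mu_k$ for every $k$. This is exactly the statement that $f_\omega$ is $\mu$-layerwise computable.

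The main point needing care is not the tail estimate, which is routine, but the uniformity. The same $k$ must serve all coordinates, and the definition of a uniformly layerwise computable sequence in the paper supplies precisely that: a single machine taking $(x,k,i)$. Without it, each $f_i$ might need its own layer index and no single machine would exist. One should also note that layerwise computability is insensitive to the choice of admissible representation of $\mathbb{R}^\omega$, since any two computable compatible metrics are computably equivalent.

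Finally, the same argument applied to the first $n$ coordinates only shows that each $f_{1:n}$ is layerwise computable uniformly in $n$. This is what makes the family $\{\mu_n\}$ uniformly computable, and it is consistent with identifying $\mu_\omega$ with the push-forward $\mu_{f_\omega}$. By Theorem~\ref{thm:LayerwisePushforwardsPreserveMLR}, that push-forward is a computable measure, which sets up the main theorem.
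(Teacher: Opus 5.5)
Your proof is correct and takes essentially the same approach as the paper, whose proof simply notes that uniform layerwise computability gives every coordinate $f_i(x)$ from the single pair $(x,k)$ and that these coordinates determine $f_\omega(x)$; your explicit product metric, dense set, and tail estimate make that last step precise. One side remark is wrong: it is not true that any two computable compatible metrics on $\mathbb{R}^\omega$ are computably equivalent (shifting the dense set by a noncomputable sequence gives a counterexample), but your argument does not use this, since you fix the standard product presentation.
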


\begin{proof}
    This follows from the definition of uniform layerwise computability: given $x \in \mathsf{MLR}^\mu_k$ each $f_i(x)$ is computable uniformly from $x, k$, and these values determine $f_\omega(x)$.
\end{proof}
In particular this implies that $\mu_\omega$ is itself computable, since it is the probability measure induced by $f_\omega$.

\begin{theorem}\label{thm:MeasurementOutcomesAreMLR}
    Let $(X, \mu_\psi)$ be the configuration space of a quantum system together with the quantum probability measure induced by the wave function $\psi$.
    Suppose $\{f_n\}$ is a uniformly computable sequence of measurements on $X$, $f_\omega$ is the induced product function, and $\mu_\omega$ is the induced joint probability measure on the outcomes of those measurements.
    Then for any $x \in X$ that is $\mu_\psi$-Martin-L\"{o}f random, $(f_1(x), f_2(x), \ldots) \in \mathbb{R}^\omega$ is $\mu_\omega$-Martin-L\"{o}f random.
\end{theorem}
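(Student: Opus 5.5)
The plan is to reduce the statement to Theorem~\ref{thm:LayerwisePushforwardsPreserveMLR} (Hoyrup--Rojas), with Lemma~\ref{lemma:LayerwiseComputableMeasurementSequence} as the bridge. Take $Y = \mathbb{R}^\omega$ with the product topology. It is a computable Polish space: the metric $d(y,z)=\sum_i 2^{-i}\min(1,|y_i-z_i|)$ is computable, and eventually-zero rational sequences form a dense computable set. By Lemma~\ref{lemma:LayerwiseComputableMeasurementSequence}, $f_\omega : X \to \mathbb{R}^\omega$ is $\mu_\psi$-layerwise computable. Concretely, from $x$ and $k$ with $x \in \mathsf{MLR}^{\mu_\psi}_k$ we can compute $f_\omega(x)$ to precision $2^{-m}$. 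To do this, we compute the first $m+1$ coordinates to precision $2^{-m-1}$ each and truncate the rest, since the tail contributes at most $2^{-m-1}$ to $d$.

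Theorem~\ref{thm:LayerwisePushforwardsPreserveMLR} then says two things about the push-forward $(\mu_\psi)_{f_\omega}(A)=\mu_\psi(f_\omega^{-1}(A))$: it is a computable probability measure on $\mathbb{R}^\omega$, and $f_\omega(x)$ is $(\mu_\psi)_{f_\omega}$-Martin-L\"of random whenever $x$ is $\mu_\psi$-Martin-L\"of random. So everything comes down to identifying $(\mu_\psi)_{f_\omega}$ with the Kolmogorov-extension measure $\mu_\omega$.

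For that identification, look at cylinder sets $A\times\mathbb{R}^\omega$ with $A\subseteq\mathbb{R}^n$ Borel. Let $\pi_n$ be the projection to the first $n$ coordinates. Then
\[
(\mu_\psi)_{f_\omega}(A\times\mathbb{R}^\omega)=\mu_\psi\bigl(f_\omega^{-1}(\pi_n^{-1}(A))\bigr)=\mu_\psi\bigl(f_{1:n}^{-1}(A)\bigr)=\mu_n(A),
\]
because $\pi_n\circ f_\omega = f_{1:n}$. So $(\mu_\psi)_{f_\omega}$ has exactly the finite-dimensional marginals $\{\mu_n\}$. The cylinder sets form a $\pi$-system generating the product Borel $\sigma$-algebra, which equals the Borel $\sigma$-algebra of $\mathbb{R}^\omega$ because the space is second countable. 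By the uniqueness part of Kolmogorov's extension theorem (or Dynkin's $\pi$-$\lambda$ theorem), $(\mu_\psi)_{f_\omega}=\mu_\omega$. Since $f_\omega(x)=(f_1(x),f_2(x),\ldots)$, the conclusion follows.

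I expect the main obstacle to be a technical hygiene point rather than the core argument. Martin-L\"of randomness for $\mu_\omega$ must be taken on $\mathbb{R}^\omega$ with a fixed computable-metric-space presentation, and the presentation used to make $f_\omega$ layerwise computable must match it. A related subtlety is that $f_\omega$ is defined only up to its values off $\mathsf{MLR}^{\mu_\psi}$; this is harmless, since that set is $\mu_\psi$-null and does not affect the push-forward. I would also note a secondary issue for the binary and discrete cases. There the outcomes live in $\{0,1\}^\omega$ or $\mathbb{N}^\omega$, which embed computably as closed subspaces of $\mathbb{R}^\omega$ carrying full $\mu_\omega$-measure. So randomness in the ambient space agrees with randomness in the subspace, and no separate argument is needed.
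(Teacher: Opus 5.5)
Your proposal is correct and follows the paper's proof: Lemma~\ref{lemma:LayerwiseComputableMeasurementSequence} makes $f_\omega$ $\mu_\psi$-layerwise computable, and Theorem~\ref{thm:LayerwisePushforwardsPreserveMLR} (Hoyrup--Rojas) then carries $\mu_\psi$-randomness over to randomness for the push-forward measure. The paper leaves implicit the computable Polish structure on $\mathbb{R}^\omega$ and the identification of $(\mu_\psi)_{f_\omega}$ with the Kolmogorov extension $\mu_\omega$ via agreement on cylinder sets, simply treating $\mu_\omega$ as the measure induced by $f_\omega$; you supply both steps, and you handle them correctly.
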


\begin{proof}
    By Lemma~\ref{lemma:LayerwiseComputableMeasurementSequence}, $f_\omega$ is $\mu_\psi$-layerwise computable.
    By Theorem~\ref{thm:LayerwisePushforwardsPreserveMLR} $f_\omega$ therefore preserves Martin-L\"{o}f randoms.
\end{proof}

\section{An algorithmic formulation of the distribution postulate}

Putting the pieces together, we will state the algorithmic distribution postulate, then show that aBM guarantees the standard Born statistics for the sort of experiments we considered at the beginning of the paper.

Let $X = \mathbb{R}^{3N}$ be configuration space, and let
\[
\mu_\psi(A) = \int_A |\psi(x)|^2 \, d\lambda(x)
\]
be the Born measure induced by a computable wave function $\psi$. Assuming $(X,\mu_\psi)$ is a computable probability space, the distribution postulate becomes:

\begin{quote}
\textbf{Algorithmic Distribution Postulate:}
The actual initial configuration $x_0 \in X$ is \emph{Martin-L\"{o}f random} with respect to the Born measure $\mu_\psi$.
\end{quote}
That is, God simply chooses an initial configuration that lies outside every effectively describable $\mu_\psi$-null set. This is a perfectly clear objective constraining law. It says that the initial configuration exhibits \textit{no effectively describable regularities} relative to the quantum measure $\mu_\psi$. In other words, the initial configuration is \textit{typical in every computable sense} relative to the standard Born measure.

If the initial configuration $x_0 \in X$ is Martin-L\"{o}f random with respect to $\mu_\psi$, then a layerwise computable sequence of measurements will yield a Martin-L\"{o}f random sequence of outcomes relative to the push-forward quantum statistical measure $\mu_\omega$. Specifically, in the case of arbitrary computable sequences of spin measurements on identically prepared, non-interacting systems, one can show that we will get a ML-random sequence of results that exhibits the standard limiting Born statistics.

\begin{corollary}\label{cor:LimRelFrequencies}
    Let $(X, \mu_\psi)$ be the computable probability space induced by the initial quantum state $\psi$ and suppose that $\{f_i\}_{i \in \mathbb{N}}$ is a uniformly layerwise-computable sequence of measurements of identically prepared, non-interacting systems. Let $f_{1:n}: X \to \mathbb{R}^n$ denote the $n$-fold product of the first $n$ measurements.
    If $x \in X$ is Martin-L\"{o}f random then
    \[\lim_{n \to \infty}\frac{1}{n}f_{1:n}(x) = \mathbb{E}_{\mu_\psi}[f_1].\]
\end{corollary}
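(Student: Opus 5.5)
The plan is to read the statement as $\lim_n \frac1n\sum_{i\le n} f_i(x) = \mathbb{E}_{\mu_\psi}[f_1]$ and to obtain it as an effective strong law of large numbers applied to the outcome sequence. The first step is Theorem~\ref{thm:MeasurementOutcomesAreMLR}: for $\mu_\psi$-Martin-L\"of random $x$, the sequence $y=f_\omega(x)=(f_1(x),f_2(x),\ldots)$ is $\mu_\omega$-Martin-L\"of random in $\mathbb{R}^\omega$. After that we never return to $X$. The whole question becomes whether every $\mu_\omega$-random point of $\mathbb{R}^\omega$ has coordinate averages converging to $\mathbb{E}_{\mu_\psi}[f_1]$.

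Next we pin down $\mu_\omega$. We interpret ``identically prepared, non-interacting'' formally: $\psi$ factorizes over the subsystems, and $f_i$ depends only on the $i$th subsystem's coordinates through a common pointer map. Then the coordinates $\pi_i(y)=y_i$ are i.i.d. under $\mu_\omega$, with common law $\nu=\mu_\psi\circ f_1^{-1}$. Hence $\mu_\omega=\nu^{\otimes\omega}$, and $\int y_1\,d\nu=\mathbb{E}_{\mu_\psi}[f_1]=:m$. By Lemma~\ref{lemma:LayerwiseComputableMeasurementSequence} and Theorem~\ref{thm:LayerwisePushforwardsPreserveMLR}, $\nu$ and $\mu_\omega$ are computable measures. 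We also assume $f_1$ is bounded (binary or spin outcomes), or at least has a computable finite second moment, so that $m$ is a computable real. This integrability hypothesis has to be made explicit, since it is implicit in the statement.

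The core step shows that the non-convergence set $N=\{y:\frac1n\sum_{i\le n}y_i\not\to m\}$ is covered by a $\mu_\omega$-Martin-L\"of test. For rational $\varepsilon>0$, let $A_{n,\varepsilon}=\{y: |\frac1n\sum_{i\le n}y_i-m|>\varepsilon\}$. This set is effectively open, because the coordinates and $m$ are computable, so a strict inequality is $\Sigma^0_1$. In the bounded case, Hoeffding's inequality gives $\mu_\omega(A_{n,\varepsilon})\le 2e^{-2n\varepsilon^2/(b-a)^2}$. In the finite-variance case one uses instead a fourth-moment/Chebyshev bound along a subsequence together with an interpolation argument. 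Put $V^\varepsilon_k=\bigcup_{n\ge n_k(\varepsilon)}A_{n,\varepsilon}$, where $n_k(\varepsilon)$ is computed so that the tail sum is at most $2^{-k}$. Then $\{V^\varepsilon_k\}_k$ is a Martin-L\"of test, uniformly in $\varepsilon$. Any $y\in N$ lies in $\bigcap_k V^\varepsilon_k$ for some rational $\varepsilon$, because it has infinitely many bad $n$ for that $\varepsilon$. Since $y=f_\omega(x)$ is $\mu_\omega$-random, it lies in no such null set, and so the averages converge to $m$.

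I expect the main obstacle to be the second step, not the effective SLLN itself. The corollary's hypotheses are stated informally, and one must justify that they deliver the product structure $\mu_\omega=\nu^{\otimes\omega}$. Equivariance \eqref{eq:EquivarianceTransport} applied to each factor shows that the one-dimensional marginals are all $\nu$. Independence, however, needs the factorization of $\psi$ to be preserved by the non-interacting Hamiltonian, so that the Born measure stays a product and the $f_i$ act on disjoint coordinate blocks. I would state this as the precise reading of the hypothesis and then verify it by computing $\mu_n$ on rectangles. If boundedness fails, the secondary difficulty is effectivizing the SLLN under only an $L^1$ assumption. There I would fall back on the known result that Martin-L\"of randomness suffices for Birkhoff's theorem applied to the shift with computable integrable observables (V'yugin; Hoyrup--Rojas), with the shift-invariance of $\nu^{\otimes\omega}$ supplying the ergodic setting.
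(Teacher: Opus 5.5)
Your proposal is correct and follows the paper's route. Theorem~\ref{thm:MeasurementOutcomesAreMLR} makes the outcome sequence $\mu_\omega$-random, the identical-preparation/non-interaction hypothesis makes $\mu_\omega$ a computable product measure, and Martin-L\"of randoms for such a measure satisfy the strong law of large numbers. The paper cites that last fact, while you prove it with a Hoeffding-based Martin-L\"of test and state the boundedness (or integrability) assumption on $f_1$ that the paper leaves implicit.
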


\begin{proof}
    By Theorem~\ref{thm:MeasurementOutcomesAreMLR} we know that $f_{1:n}(x)$ is the first $n$ coordinates of a Martin-L\"{o}f random vector in $\mathbb{R}^n$.
    The corollary then follows from the fact that Martin-L\"{o}f random sequences satisfy the law of large numbers.
\end{proof}

Two small notes regarding Corollary~\ref{cor:LimRelFrequencies} and its interpretation. First, if $\psi$ is a computable wave function (in the sense of computable analysis), then the Born measure $\mu_\psi$ is a computable probability measure on $X$. And since we are supposing identically prepared, non-interacting systems, this gives us independence. So $\mu_\omega$ is a computable product measure, and the strong law of large numbers applies.

As a particular case of Corollary~\ref{cor:LimRelFrequencies}, suppose that every $f_i$ takes values in a finite set $V$, as would be the case in a sequence of spin measurements.
Then we have that for each value $v \in V$ as $n \to \infty$,
\[\frac{\#v(f_{1:n}(x))}{n} \to \mu_f(v),\]
that is, the outcome sequence exhibits the correct limiting relative frequencies.

More generally, if one measures a particular observable, say $x$-spin, on any computable subsequence of the systems, then the results will be Martin-L\"{o}f random with the standard quantum limiting relative frequencies since every infinite computable subsequence is itself Martin-L\"{o}f random with the same relative frequencies. As a result, every computable subsequence of measurements will yield the standard quantum statistics in the limit for every observable that one tries infinitely often.

Hence, if the initial configuration $x_0$ is $\mu_\psi$-Martin-L\"{o}f random, the results of any layerwise computable sequence of spin measurements of identically prepared, non-interacting systems will certainly be Martin-L\"{o}f random with the standard Born relative frequencies in the limit. This is the strong guarantee we wanted from the algorithmic theory. It means that one will get empirical results that confirm the theory in every physically possible world.\footnote{See Barrett and Chen (2025) and (2026) for discussions of this point and the virtues of such behavior.}

One will also certainly get Martin-L\"{o}f random outcomes that are distributed according to the standard quantum statistics for sequences of measurements of a single system. The example of alternating $x$- and $y$-spins on a single particle $S$ that we considered in \S5 (the second of the three shortcomings of the toy constructions) illustrates this.

\begin{corollary}
    Let $(X, \mu_\psi)$ be the computable probability space induced by the initial state $\psi$ and suppose that $\{g_i\}_{i \in \mathbb{N}}$ is a uniformly layerwise-computable sequence of alternating $x$-spin and $y$-spin measurements of a single particle $S$.
    As before let $g_{1:n}:X \to \mathbb{R}^n$ be the $n$-fold product of the first $n$ measurements.
    If $x \in X$ is $\mu_{\psi}$-Martin-Löf random then
    \[\lim_{n \to \infty}\frac{1}{n}g_{1:n}(x)=\frac{1}{2}.\]
\end{corollary}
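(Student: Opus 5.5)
The plan is to reduce the corollary to Theorem~\ref{thm:MeasurementOutcomesAreMLR} plus an identification of the push-forward measure $\mu_\omega$, and then to apply the law of large numbers for Martin-L\"of random sequences, as in Corollary~\ref{cor:LimRelFrequencies}. Since $\{g_i\}$ is uniformly layerwise computable, Lemma~\ref{lemma:LayerwiseComputableMeasurementSequence} and Theorem~\ref{thm:MeasurementOutcomesAreMLR} give the following: for every $\mu_\psi$-Martin-L\"of random $x$, the outcome sequence $g_\omega(x)=(g_1(x),g_2(x),\ldots)\in\{0,1\}^\omega$ is $\mu_\omega$-Martin-L\"of random. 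Here $\mu_\omega$ is the joint measure induced by the finite-dimensional marginals $\mu_n(A)=\mu_\psi(g_{1:n}^{-1}(A))$. Everything then hinges on what $\mu_\omega$ is. The difference from Corollary~\ref{cor:LimRelFrequencies} is that we no longer have independence for free from "identically prepared, non-interacting systems".

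The central step is to compute the marginals $\mu_n$. Each $g_i$ is the composition of the Bohmian flow up to the $i$th measurement with a pointer partition. So by the equivariance relation \eqref{eq:EquivarianceTransport}, $\mu_n$ equals the Born measure, at the time of the $n$th measurement, of the region of configuration space in which the $n$ pointers read $s_1,\ldots,s_n$. Because the pointer records are decohered (macroscopically disjoint), this Born weight factorizes into the standard sequential quantum probabilities. Concretely, conditional on the records $s_1,\ldots,s_n$, the effective (conditional) wave function of $S$ is the corresponding $x$- or $y$-spin eigenstate, since the $n$th measurement alternates between the two. The next measurement is of the complementary spin component, and for it $|\langle \uparrow_y|\uparrow_x\rangle|^2=|\langle \uparrow_y|\downarrow_x\rangle|^2=\tfrac12$, and likewise with $x$ and $y$ interchanged. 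Hence for all $n\geq 1$ and all $s\in\{0,1\}^n$,
\[
\mu_{n+1}\bigl(s\,1\bigr)=\tfrac12\,\mu_n(s).
\]
Thus $\mu_\omega=\beta_p\otimes\lambda$, where $\beta_p$ is the (computable) Bernoulli law of the first outcome, determined by the initial $\psi$, and $\lambda$ is the uniform measure on $\{0,1\}^\omega$. I expect this to be the main obstacle. It requires making precise, within aBM, the standard Bohmian argument that the Born weight of decohered record regions equals the product of the sequential Born probabilities computed with conditional wave functions. I would handle it by induction on $n$, using equivariance at each measurement time together with the disjointness of the pointer supports, which kills the interference terms.

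Given this identification, the rest is routine. The shift map $\tau(s_1s_2\ldots)=s_2s_3\ldots$ is computable, and it pushes $\beta_p\otimes\lambda$ forward to $\lambda$. By Theorem~\ref{thm:LayerwisePushforwardsPreserveMLR}, $\tau(g_\omega(x))$ is therefore an unbiased Martin-L\"of random sequence. Unbiased Martin-L\"of random sequences satisfy the strong law of large numbers, so the relative frequency of $1$'s along $\tau(g_\omega(x))$ tends to $\tfrac12$. Restoring the single first coordinate does not affect the limit. Hence $\frac1n\sum_{i\le n} g_i(x)\to\tfrac12$, which is the claimed statement (with $g_{1:n}$ read as the running average of the $0/1$-valued outcomes, exactly as in Corollary~\ref{cor:LimRelFrequencies}).
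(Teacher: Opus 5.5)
Your proposal is correct and follows the same route as the paper: Theorem~\ref{thm:MeasurementOutcomesAreMLR} makes the outcome sequence $\mu_\omega$-Martin-L\"of random, $\mu_\omega$ is identified as (essentially) fair independent coin-flipping, and the law of large numbers for Martin-L\"of random sequences finishes the argument. The differences are refinements. Where the paper simply asserts that the alternating $x$/$y$ splittings are i.i.d.\ with probability $\tfrac12$, you justify this via conditional wave functions and decoherence, and you also allow a biased first outcome, which you remove with the computable shift and Theorem~\ref{thm:LayerwisePushforwardsPreserveMLR}; this is slightly more general, since the paper's i.i.d.\ claim tacitly assumes the first measurement is itself fair, as it is for the $z$-spin eigenstates considered earlier in the paper.
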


\begin{proof}
    Since the full sequence of measurements $g_i$ consist of alternating splittings of the wave function along the $x$ and $y$ axes, the sequence is independent and identically distributed under $\mu_\psi$.
    Hence, the limiting relative frequencies of up and down deflections for both the $x$- and $y$-spin measurements is $1/2$.
    As in the proof of Corollary \ref{cor:LimRelFrequencies}, we know that $\lim_{n \to \infty}g_{1:n}(x)$ is Martin-Löf random in the push-forward joint measure $\mu_\omega$ induced by $\mu_\psi$. 
    So, again by the law of large numbers, 
    \[\lim_{n \to \infty}\frac{1}{n}g_{1:n}(x)=\frac{1}{2}.\qedhere\]
\end{proof}

In each of these cases, choosing the initial configuration in accord with the algorithmic distribution postulate guarantees measurement records that are Martin-L\"{o}f random with relative frequencies given by the standard Born limiting relative frequencies for the relevant sequence of experiments. This addresses the experiments and issues we considered in section~5.

\section{Discussion}

We have shown that for layerwise computable interactions, the configuration being Martin-L\"of random relative to $\mu_\psi$ is preserved under the dynamics in algorithmic Bohmian mechanics (aBM), just as the epistemic probability density of the configuration $\mu_\psi$ is preserved under the dynamics in standard Bohmian mechanics. And, with respect to empirical predictions, we have shown how aBM guarantees a Martin-L\"of random sequence of measurement results that satisfies the standard Born statistics for each type of experiment described in Section~5.

This does not, however, mean that the two theories make the same empirical predictions. Most saliently, aBM makes statistical predictions, not probabilistic predictions. And its statistical predictions are, in many cases, stronger than those of standard Bohmian mechanics, as it guarantees that the results of layerwise computable measurements will be Martin-L\"of random with the Born relative frequencies rather than merely predicting this with probability one.

While the Born statistics do not by themselves entail the standard forward-looking quantum probabilities, if (1) an agent’s credences are \textit{exchangeable} over the outcomes of a given sequence of measurements (i.e., if she believes that the order of the measurements does not matter), and (2) she accepts algorithmic Bohmian mechanics, then de Finetti’s representation theorem commits her to assigning the standard forward-looking Born credences for single-case outcomes to a sequence of spin measurements.\footnote{See Barrett and Chen (2026) for a detailed discussion of this point.} Thus, an agent with exchangeable credences is committed to assigning the standard single-shot credences to measurement outcomes for these experiments.

The upshot is that whenever the physical situation is such that one’s credences over outcomes are exchangeable, aBM commits one to the same single-shot credences as standard quantum mechanics. The experiments we have considered are cases in which one would not expect the order of the measurements to matter.\footnote{In the experiment where one alternates between $x$-spin and $y$-spin measurements, one would not expect the order of measurement pairs to matter.}

There remains work to be done in analyzing the theory. A significant open question concerns when Martin-L\"{o}f randomness relative to the conditional measure associated with a subsystem is preserved under the dynamics. This is a delicate topic for future study.

For the present, we have shown how to replace an informal appeal to typicality with a mathematically precise objective constraint in a way that guarantees the standard quantum statistical predictions for the experiments considered here. The algorithmic distribution postulate simply says that the initial configuration $x_0$ is in the set $\mathsf{MLR}^{\mu_\psi}$. And it is perfectly clear what this means.

\appendix

\section{Algorithmic Randomness and Computable Measure Theory}

Much of the recent work on algorithmic randomness has involved extending randomness notions---e.g., Martin-L\"{o}f randomness---to spaces other than $2^\omega$.\footnote{See e.g. Hoyrup and Rojas (2009), Rute (2013, 2020).}
This is important because $2^\omega$ has a uniquely well-behaved topology with regard to computability theory.
In particular the canonical topology on $2^\omega$ is generated by the countable collection of \textit{cones} (or \textit{cylinder sets}) $[\sigma] := \{S \in 2^\omega \mid \sigma \prec S\}$, where $\sigma \in 2^{<\omega}$ and we write ``$\sigma \prec S$'' to indicate that $\sigma$ is a proper initial segment of $S$.
These cones are clopen in the canonical topology; in the arithmetical hierarchy they are computable ($\Delta^0_1$) sets.
A well-known result of general topology states that a topological space $\Omega$ has nontrivial clopen subsets if and only if $\Omega$ is totally disconnected; thus important spaces such as $\mathbb{R}, \mathbb{C}, C[0,1]$, and $L^p$ spaces do not have nontrivial clopen subsets.
We therefore work with a more general notion of computability on topological spaces, one that has been thoroughly studied in the computable analysis literature.\footnote{See for example Brattka and Hertling and Weihrauch (2008).}

\begin{definition}
    A \textit{computable Polish space} is a triple $(X, \pazocal{S}, d)$ such that
    \begin{itemize}
        \item $(X, d)$ is a separable complete metric space;
        \item $\pazocal{S} \subseteq X$ is countable and dense; 
        \item for all $s_0, s_1 \in \pazocal{S}$, $d(s_0, s_1)$ is a uniformly computable real.
    \end{itemize}
\end{definition}
We will often simply say that $X$ is a computable Polish space, rather than refer to the entire triple.
Computable Polish spaces have sufficient structure to define complexity notions for their subsets via the arithmetic hierarchy, as we now show.
As usual we write $B(x,r)$ to denote the basic open ball centered at $x \in X$ with radius $r \in \mathbb{R}$.
\begin{definition}
    Let $X$ be a computable Polish space.
    A set $U \subseteq X$ is \textit{effectively open} if it is of the form
    \[U = \bigcup_{i \in I}B(s_i, q_i)\]
    where $I \subseteq \mathbb{N}$ is c.e. and for all $i$, $s_i \in \pazocal{S}$ and $q_i \in \mathbb{Q}$. 
\end{definition}
Effectively open sets are precisely the $\Sigma^0_1$ subsets.
We can therefore define $\Pi^0_1$ sets (which in computable analysis are called ``effectively closed sets'') as complements of effectively open sets, $\Sigma^0_2$ sets as effective countable unions of effectively closed sets, etc.
We then lift the computability notions from sets to higher structures, starting with functions.
\begin{definition}
    Let $X$ and $Y$ be computable Polish spaces.
    A function $f:X \to Y$ is \textit{computable continuous} (or sometimes simply \textit{computable}) if for every effectively open $U \subseteq Y$, $f^{-1}(U) \subseteq X$ is uniformly effectively open.
\end{definition}
From the definition it is clear that computability is a strong form of continuity, which is a basic fact of computable analysis (see Weihrauch 2000).

Finally we define computability for measures on computable Polish spaces.
\begin{definition}
    Let $X$ be a computable Polish space and let $\mu$ be a $\sigma$-finite Borel measure on $X$.
    We say that $\mu$ is \textit{computable} if for every effectively open $U \subseteq X$, either $\mu(U) = \infty$ or $\mu(U)$ is a left-c.e. real uniformly in (a code for) $U$.
\end{definition}
It follows from the definition that if $\mu$ is a finite measure (in particular, a probability measure) then for all effectively open sets $U$, $\mu(U)$ is a uniformly left-c.e. real.
When $X$ is a computable Polish space and $\mu$ is a computable probability measure on $X$, we will call the pair $(X, \mu)$ a \textit{computable probability space.}

Computable probability theory has developed into a rich, well-studied domain (see Rute (2020) and Hoyrup and Rute (2021) for overviews).
In particular, computable probability measures on computable Polish spaces are the right structures to generalize algorithmic randomness notions.
\begin{definition}
    Let $(X, \mu)$ be a computable probability space.
    A $\mu$-\textit{Martin-L\"{o}f test} is a sequence of uniformly effectively open sets $\{U_n\}_{n \in \mathbb{N}}$ such that $\mu(U_n) \leq 2^{-n}$.
    A point $x \in X$ is $\mu$-\textit{Martin-L\"{o}f random} if for any $\mu$-Martin-L\"{o}f test $\{U_n\}_{n \in \mathbb{N}}$, $x \notin \bigcap_{n=1}^\infty U_n$.
\end{definition}
So we recover a suitably generalized definition of Martin-Löf randomness, one that applies to any computable probability measure over any space with sufficiently rich structure (specifically, any computable Polish space).

\begin{center}
\large{Bibliography}
\end{center}

\vspace{.25cm}
\noindent
Albert, David Z.\ (1992) \textit{Quantum Mechanics and Experience}. Cambridge, MA: Harvard University Press.

\vspace{.25cm}
\noindent
Allori, Valia (2020)  (ed.), \textit{Statistical Mechanics and Scientific Explanation}. Singapore: World Scientific.

\vspace{.25cm}
\noindent
Barrett, Jeffrey A.\ (2019) \textit{The Conceptual foundations of Quantum Mechanics}. Oxford: Oxford University Press. 

\vspace{.25cm}
\noindent
Barrett, Jeffrey A.\ and Chen, Eddy Keming. (2025) ``Algorithmic Randomness and Probabilistic Laws,'' \emph{The British Journal for the Philosophy of Science}, forthcoming.  \\ \href{https://arxiv.org/abs/2303.01411}{https://arxiv.org/abs/2303.01411}

\vspace{.25cm}
\noindent
Barrett, Jeffrey A.\ and Chen, Eddy Keming. (2026) ``Deriving the Principal Principle from Algorithmic Randomness and Exchangeability,'' manuscript.

\vspace{.25cm}
\noindent
Barrett, Jeffrey A.\ and Simon Huttegger (2021)  ``Quantum Randomness and Underdetermination,'' \textit{Philosophy of Science}, Volume 87, Issue 3.

\vspace{.25cm}
\noindent
Bell, John S.\ (1987) \emph{Speakable and Unspeakable in Quantum Mechanics}. Cambridge: Cambridge University Press.

\vspace{.25cm}
\noindent
Brattka, Vasco and Peter Hertling and Klaus Weihrauch (2008) ``A Tutorial on Computable Analysis.'' In: Cooper, S.B., Löwe, B., Sorbi, A. (eds) \textit{New Computational Paradigms}. New York: Springer.

\vspace{.25cm}
\noindent
Callender, Craig (2007) ``The Emergence and Interpretation of Probability in Bohmian Mechanics,'' \emph{Studies in History and Philosophy of Modern Physics} \textbf{38}(2): 351--370.

\vspace{.25cm}
\noindent
Dasgupta, Abhijit (2011) ``Mathematical Foundations of Randomness,'' in Prasanta Bandyopadhyay and Malcolm Forster (eds.), \emph{Philosophy of Statistics (Handbook of the Philosophy of Science: Volume 7)}, Amsterdam: Elsevier, pp. 641–710. 

\vspace{.25cm}
\noindent
de Finetti, Bruno. (1974–1975) \emph{Theory of Probability}, Vols. 1 and 2. Trans. A. Machi and A. F. M. Smith. New York: Wiley.

\vspace{.25cm}
\noindent
Diaconis, Persi and Brian Skyrms (2018) \textit{Ten Great Ideas about Chance}, Princeton University Press: Princeton and Oxford.

\vspace{.25cm}
\noindent
D\"urr, Detlef, Sheldon Goldstein, and Nino Zangh\'i (1992) ``Quantum Equilibrium and the Origin of Absolute Uncertainty,'' \emph{Journal of Statistical Physics} \textbf{67}(5--6): 843--907.

\vspace{.25cm}
\noindent
D\"urr, Detlef, Sheldon Goldstein, and Nino Zangh\'i (1993) ``A Global Equilibrium as the Foundation of Quantum Randomness,'' \emph{Foundations of Physics} \textbf{23}(5): 721--738.

\vspace{.25cm}
\noindent
Goldstein, Sheldon (2012) ``Typicality and Notions of Probability in Physics,'' in Y.\ Ben-Menahem and M.\ Hemmo (eds.), \emph{Probability in Physics}, Berlin: Springer, pp.~59--71.

\vspace{.25cm}
\noindent
Goldstein, Sheldon (2025) ``Bohmian Mechanics,'' \emph{The Stanford Encyclopedia of Philosophy}, (Fall 2025 Edition), Edward N. Zalta \& Uri Nodelman (eds.), \\ \href{https://plato.stanford.edu/archives/fall2025/entries/qm-bohm/}{https://plato.stanford.edu/archives/fall2025/entries/qm-bohm/}

\vspace{.25cm}
\noindent
H\'ajek, Alan (2023) "Interpretations of Probability", in Edward N. Zalta \& Uri Nodelman (eds.) \emph{The Stanford Encyclopedia of Philosophy (Winter 2023 Edition)}. 

\noindent
\href{https://plato.stanford.edu/archives/win2023/entries/probability-interpret/}{https://plato.stanford.edu/archives/win2023/entries/probability-interpret/}

\vspace{.25cm}
\noindent
Hoyrup, Mathieu. ``Algorithmic randomness and layerwise computability.'' Algorithmic Randomness: Progress and Prospects, edited by Johanna NY Franklin and Christopher P. Porter, Lecture Notes in Logic, Cambridge University Press, Cambridge (2020): 115-133.

\vspace{.25cm}
\noindent
Hoyrup, M., and Rojas, C. (2009). Applications of effective probability theory to Martin-Löf randomness. In International colloquium on automata, languages, and programming (pp. 549-561). Berlin, Heidelberg: Springer Berlin Heidelberg.

\vspace{.25cm}
\noindent
Hoyrup, M., and Rute, J. (2021). Computable measure theory and algorithmic randomness. In Handbook of Computability and Complexity in Analysis (pp. 227-270). Cham: Springer International Publishing.

\vspace{.25cm}
\noindent
Lazarovici, Dustin. (2023). \emph{Typicality Reasoning in Probability, Physics, and Metaphysics.} Palgrave Macmillan.

\vspace{.25cm}
\noindent
Lewis, David. (1980) ``A Subjectivist’s Guide to Objective Chance,'' in Richard C. Jeffrey (ed.), \emph{Studies in Inductive Logic and Probability}, Vol. II, Berkeley: University of California Press, 263–293.

\vspace{.25cm}
\noindent
Li, Ming and Paul Vit\'anyi (2019). \emph{An Introduction to Kolmogorov Complexity and Its Applications, 4th Edition}, New York, NY: Springer. 

\vspace{.25cm}
\noindent
Martin-L\"of, Per (1966) ``The definition of random sequences,'' \emph{Information and Control} 9(6):602--619.

\vspace{.25cm}
\noindent
Maudlin, Tim. (2011) ``Three Roads to Objective Probability.'' In: C. Beisbart and S. Hartmann
(eds.) \emph{Probabilities in Physics}, pp. 293–319. New York, NY: Oxford University Press.

\vspace{.25cm}
\noindent
Miyabe, Kenshi. (2013). $L^1$-computability, layerwise computability and Solovay reducibility. Computability, 2(1), 15-29.

\vspace{.25cm}
\noindent
Porter, Christopher P. (2020) ``Biased Algorithmic Randomness'' in \emph{Algorithmic Randomness: Progress and Prospects},
Edited by Johanna N.\ Y.\ Franklin and Christopher P.\ Porter. Lecture Notes in Logic, 50, Association for Symbolic Logic. Cambridge: Cambridge University Press.

\vspace{.25cm}
\noindent
Rute, Jason (2013) "Topics in Algorithmic Randomness and Computable Analysis." Carnegie Mellon University. 

\vspace{.25cm}
\noindent
Rute, Jason (2020) ``Algorithmic Randomness and Constructive/Computable Measure Theory'' in \textit{Algorithmic Randomness: Progress and Prospects}. Ed. Johanna N. Y. Franklin and Christopher P. Porter. Lecture Notes in Logic, 50, Association for Symbolic Logic. Cambridge: Cambridge University Press, 58–114.

\vspace{.25cm}
\noindent
Valentini, Antony (1991) ``Signal-Locality, Uncertainty, and the Subquantum $H$-Theorem. I,'' \emph{Physics Letters A} \textbf{156}(1--2): 5--11.

\vspace{.25cm}
\noindent
Valentini, Antony and Hans Westman (2005) ``Dynamical Origin of Quantum Probabilities,'' \emph{Proceedings of the Royal Society A} \textbf{461}(2053): 253--272.

\vspace{.25cm}
\noindent
Weihrauch, Klaus (2000) \textit{Computable Analysis: An Introduction}. Springer-Verlag.

\vspace{.25cm}
\noindent
Wilhelm, Isaac. (2022). ``Typical: A theory of typicality and typicality explanation.'' \textit{British Journal for the Philosophy of Science}, 73:2. 

\end{document}